\documentclass{article}
%
%
\usepackage{graphicx}
\usepackage{amsmath}
\usepackage{amssymb}
\usepackage{amsthm}

\usepackage{authblk}
\usepackage[small]{caption}
\usepackage{xspace}
\usepackage{url}
\usepackage{algorithmic}
\usepackage{algorithm}
\usepackage[T1]{fontenc}
\usepackage[adobe-utopia]{mathdesign}
 \newtheorem{theorem}{Theorem}
\newtheorem{problem}{Problem}







\newcommand{\ignore}[1]{}
\newcommand{\ie}{\textit{i.e.}\xspace}
\newcommand{\eg}{\textit{e.g.}\xspace}
\newcommand{\ea}{\textit{et al.}\xspace}

%


\begin{document}

\title{An Exact Algorithm for Side-Chain Placement\newline in Protein Design}
\author[1]{Stefan Canzar}
\author[2]{Nora C.\ Toussaint}
\author[1]{Gunnar W.\ Klau}

\affil[1]{\small CWI, Life Sciences group, Science Park 123, 1098 XG
  Amsterdam, the Netherlands, \{stefan.canzar, gunnar.klau\}@cwi.nl}
\affil[2]{University of T{\"u}bingen, Center for Bioinformatics,
            Applied Bioinformatics, Sand 14, 72076 T{\"u}bingen,
            Germany, toussaint@informatik.uni-tuebingen.de}
\date{}


\maketitle

\begin{abstract}
Computational protein design aims at constructing novel or improved functions on the
structure of a given protein backbone and has important applications
in the pharmaceutical and biotechnical industry. The underlying
combinatorial side-chain placement problem consists of choosing a 
side-chain placement for each residue position such that the
resulting overall energy is minimum. The choice of the side-chain then
also determines the amino acid for this position. 
Many algorithms for this $\mathcal{NP}$-hard problem have been proposed in the context of homology
modeling, which, however, 
reach their limits when faced with large protein design instances.
 
In this paper, we propose a new exact method for the side-chain placement problem
that works well even for large instance sizes as they appear in
protein design. Our main contribution is a dedicated
branch-and-bound algorithm that combines tight upper and lower
bounds resulting from a novel Lagrangian relaxation approach for
side-chain placement. Our experimental results show that our method
outperforms alternative state-of-the-art exact approaches and makes it
possible to optimally solve large protein design instances routinely.
\end{abstract}

\section{Introduction}
\label{sec:intro}

Protein design aims at constructing novel or improved functions on the
structure of a given protein backbone.  
Since proteins are key players in virtually all biological processes, the ability to design proteins is of 
great practical interest, \eg, to the pharmaceutical and biotechnological industry. Experimental protein design methods, 
such as directed evolution \cite{Bloom2005}, have been applied successfully. However, since experimental methods are time- and money-consuming, computational approaches are an attractive alternative.

Computational protein design is related to the side-chain placement (SCP) problem in protein homology modeling. Given the modeled backbone of a protein, 
the amino acid side-chains have to be placed on this backbone in the energetically most favorable conformation. Two assumptions are commonly made: 
(i) side-chains adopt only statistically dominant low-energy side-chain conformations, the so-called rotamers \cite{Dunbrack2002}, and 
(ii) the energy of a protein is the sum of intrinsic
side-chain energies and pairwise interaction energies. 
These assumptions lead to the following discrete optimization problem: For each residue position choose a rotamer such that the total 
energy of the protein is minimum. This problem has been shown to be NP-hard \cite{Pierce2002} and inapproximable \cite{Chazelle2004}. 


In protein design the candidate rotamers at each position do not only come from a single amino acid but from several potential amino acids, 
yielding very large problem instances. Previous \textit{in silico} approaches to protein design differ in their choice of rotamer library, 
energy function, and optimization method. Utilization of a higher-resolution rotamer library and a more accurate energy function will improve the results. 
On the other hand, it will increase computation time and problem size. Regarding the optimization methods, computational protein design approaches 
generally employ computationally expensive
heuristics such as the Monte Carlo method \cite{Dantas2003,Dantas2007,Shah2007}. 
Other heuristics, which have been proposed for SCP in protein homology modeling, could also be applied 
\cite{Wernisch2000,Xiang2001,Desmet2002,Yanover2006,Sontag2008}. 
However, Voigt \textit{et
  al.} \cite{Voigt2000} have shown that these inexact 
algorithms become less accurate with increasing problem size. Thus, exact methods capable of solving large protein 
design instances are desirable. Several approaches to solving the SCP problem exactly have been proposed, 
including dead end elimination \cite{Desmet1992,Goldstein1994,Pierce2000} (combined with systematic search \cite{Leach1998} 
or residue reduction \cite{Xie2006}), integer linear programming \cite{Althaus2002,Kingsford2005}, 
branch-and-bound \cite{Canutescu2003,Wernisch2000}
and tree decomposition \cite{Xu2006}. 
While most of these approaches work well for homology modeling, 
they reach their limits when applied to protein design.


In this paper, we propose a novel exact method for SCP
that works well even for large instance sizes as they appear in
protein design. After presenting the combinatorial problem formally in
Section~\ref{sec:comb-probl-form}, we describe our new method in
Section~\ref{sec:approach}. Our main contribution is a dedicated
branch-and-bound algorithm that combines tight upper and lower 
bounds resulting from a novel Lagrangian relaxation approach for
SCP\@. In Section~\ref{sec:results} we present and
discuss our experimental results, in which we show that our method
outperforms alternative state-of-the-art exact approaches
and makes it possible to optimally solve large protein design
instances routinely.

\section{Combinatorial Problem Formulation and Notation}
\label{sec:comb-probl-form}

We study the following graph-theoretic formulation of the side-chain placement problem:

\begin{problem}[SCP]
  Given a $k$-partite graph $G = (V, E)$, $V = V_1 \cup \ldots \cup
 V_k$, with node costs $c_v$, $v \in V$, and edge costs $c_{uv}$, $uv
  \in E$, determine an assignment $a: \{1,
  \ldots, k\} \to V$ with $a(i) \in V_i$, $1 \leq i \leq k$,
  such that the cost
\[
\sum_{i = 1}^k c_{a(i)} + \sum_{i = 1}^{k-1}\sum_{j=i+1}^k c_{a(i)a(j)}
\]
of the induced graph is minimum.  
\end{problem}

Here, each node set $V_i$ corresponds to the candidate rotamers for
the residue set at position $i$. Node costs model self energies of
rotamers and edge costs model interaction energies between pairs of
rotamers. A solution is given by selecting for each residue position $i$, $1
\leq i \leq k$, exactly one rotamer $a(i)$. Clearly, the choice of the
rotamer determines also the amino acid at this position.  

In the description of our algorithm we will also use a function $r: V
\to \{1, \ldots, k\}$ that denotes the residue position of a rotamer
$v$, that is, $r(v) = i$ if and only if $v \in V_i$.

\section{Lagrangian Relaxation Based Branch-and-Bound}
\label{sec:approach}

We now present our novel approach to solve the SCP problem to provable optimality. Its
core is the computation of sharp upper and lower bounds using
a novel Lagrangian relaxation technique within a dedicated branch-and-bound approach. 
 


\subsection{Upper and Lower Bounds by Lagrangian Relaxation}\label{sec:lagrrel}


Our relaxation
builds on an integer linear programming
(ILP) formulation for SCP that has been
introduced by Althaus \ea \cite{Althaus2002} and extended by
Kingsford \ea \cite{Kingsford2005}:

\begin{align}
\min \:& \sum_{v \in V}c_vx_v + \sum_{uv \in E'} c_{uv}y_{uv}\label{eq:start:kingsford}\\
\text{s.t.~}& \sum_{v \in V_i}x_v = 1 & 1 \leq i \leq k\label{eq:1}\\
& \sum_{u \in V_i}y_{uv} = x_v & 1 \leq i \leq k \text{, for all } v
\in V_j \text{ with } j \in \mathcal{N}^+_i\label{eq:2}\\
& \sum_{u \in V_i\atop c_{uv} < 0}y_{uv} \leq x_v & 1 \leq i \leq k \text{, for all } v
\in V_j \text{ with } j \notin \mathcal{N}^+_i\label{eq:3}\\
& x_v \in \{0, 1\} & \text{for all } v \in V\\
& y_{uv} \in \{0, 1\} & \text{ for all } uv \in E'\label{eq:end:kingsford}
\end{align}

The formulation contains binary variables $x_v$, for nodes $v \in V$, and
$y_{uv}$, for edges $uv \in E$, with the interpretation that a variable is $1$
if the corresponding node or edge is part of the induced subgraph and
$0$ otherwise. Constraints~\eqref{eq:1} express that exactly one
rotamer must be chosen per residue position. Constraints~\eqref{eq:2} and~\eqref{eq:3}
link node and edge variables. When a rotamer $v$ of a residue position $j$ is
chosen, \ie, $x_v = 1$, exactly one incident edge from each other
residue position $i \neq j$ must be chosen as well for residue positions $i$ that share
positively weighted edges with residue position $j$, \ie, $j \in
\mathcal{N}^+_i := 
\{\ell \in \{1, \ldots, k\} \mid \exists uv \in E, c_{uv} > 0, u \in
 V_i, v \in V_{\ell} \}$. If the two residue positions $i$ and $j$ are linked
  only by non-positive edges, \ie, $j \notin \mathcal{N}^+_i$, the
  relaxed constraints~\eqref{eq:3} apply: zero-weighted edges do not
  have to be forced to be in the solution and the corresponding
  variables can be removed from the ILP\@. Let $E' := E \setminus \{uv
  \in E \mid u \in V_i, v \in V_j, j \notin \mathcal{N}^+_i, c_{uv} = 0\}$
  be the set of remaining edges. 

  The distinction between pairs of residues $i,j$ with $j\in \mathcal{N}^+_i$ and pairs $i,j$ with
  $j\notin \mathcal{N}^+_i$ in constraints~\eqref{eq:2} and~\eqref{eq:3} leads to a considerably smaller number of 
  variables in practice and to a much better performance. Nevertheless, it is not crucial for the understanding	
  of our approach and we thus drop this distinction in the remainder of this work and treat all pairs of 
  residues as in constraint~\eqref{eq:2} for the sake of clarity of the
  presentation, that is, without removing any variables. In our
  implementation, however, we treat constraints~\eqref{eq:2}
  and~\eqref{eq:3} differently. 
         
  While previous work \cite{Althaus2002,Kingsford2005} focuses on
  solving the linear programming (LP) relaxation of
  \eqref{eq:start:kingsford}--\eqref{eq:end:kingsford}, we propose a
  Lagrangian relaxation approach. In the case of SCP
  this leads to a much
  more efficient algorithm, because we exploit structural knowledge
  of the SCP problem. The idea of Lagrangian relaxation is
  to relax constraints of an intractable problem, \eg, the SCP ILP, 
	such that the relaxed problem can be solved
  efficiently. The relaxed constraints are moved to the objective
  function, penalized by so-called Lagrangian multipliers. An optimal
  solution of the original problem, \ie, an energy-minimum choice of
  candidate rotamers, is also a solution of the relaxed problem, and
  every optimal solution of the relaxed problem provides a lower
  bound on the optimal score of the original problem. The Lagrangian multipliers are
 adjusted iteratively such that the lower bound increases
 gradually. Also, after each iteration, we can evaluate the solution
 of the relaxed problem and thus obtain a new upper bound to the SCP
 problem. During the iterative process, the lowest upper and
 highest lower bound move
 closer and closer together. If they coincide,
  a provably optimal SCP has been found. Otherwise,
  we stop the process after a fixed number of iterations and use the
  bounds within the branch-and-bound framework. 

The key idea of our Lagrangian relaxation approach is to define a total
order, denoted by $<$, on the residue positions, to split the constraints that link node and
edge variables into a left and a right part and then relax the
right part of the constraints. W.l.o.g.\ and for ease of notation
we assume that the residue positions have already been ordered, that is,
residue position $i$ denotes the $i$th residue position according to $<$.

First, we rewrite constraints~\eqref{eq:2} as 
left and right parts, that is,~\eqref{eq:2} becomes
\begin{align}
& \sum_{u \in V_i}y_{uv} = x_v & 1 \leq i \leq k-1 \text{, for all } v
\in V_j \text{ with } j > i\label{eq:4}\\ 
& \sum_{u \in V_i}y_{uv} = x_v & 2 \leq i \leq k \text{, for all } v
\in V_j \text{ with } j < i\label{eq:5}
\end{align} 
\ignore{
and~\eqref{eq:3} becomes
\begin{align}
& \sum_{u \in V_i\atop c_{uv} < 0}y_{uv} \leq x_v & 1 \leq i \leq k-1 \text{, for all } v
\in V_j \text{ with } j \notin \mathcal{N}^+_i \text{ and } j > i\label{eq:6}\\
& \sum_{u \in V_i\atop c_{uv} < 0}y_{uv} \leq x_v & 2 \leq i \leq k \text{, for all } v
\in V_j \text{ with } j \notin \mathcal{N}^+_i \text{ and } j < i\label{eq:7}
\end{align}
}
We dualize constraints~\eqref{eq:5} for all non-neighboring residues,
\ie, constraints for which $i>j+1$,  with Lagrangian multipliers $\lambda_v^i$.
To simplify notation, we further introduce $\lambda_v^i:=0$ for
neighboring residues, \ie, for which $r(v)+1=i$ holds.
For fixed Lagrangian multipliers $\lambda_v^i$ we obtain the following relaxation, which we denote by $(LR_\lambda)$:
\begin{align}
\min \:& \sum_{j = 1}^{k}\sum_{v \in V_j}\left(c_v + \sum_{i >
    j+1}\lambda^i_v\right )x_v + \sum_{uv \in
  E \atop r(u)<r(v)}\left(c_{uv} - \lambda_u^{r(v)}\right) y_{uv}\hspace*{-25em}\label{eq:start:relaxation}\\
\text{s.t.~}& \sum_{v \in V_i}x_v = 1 & 1 \leq i \leq k\label{eq:8}\\
& \sum_{u \in V_i}y_{uv} = x_v & \hspace*{-1em}1 \leq i \leq k-1 \text{, for all } v
\in V_j \text{ with } j > i\label{eq:9}\\ 
& \sum_{u \in V_i}y_{uv} = x_v & 2 \leq i \leq k \text{ for all } v
\in V_{i-1}\label{eq:10}\\
&x_v, y_{uv} \in \{0, 1\}&
\end{align}
Note that a distinction between pairs of residues according to constraints~\eqref{eq:2} and~\eqref{eq:3} requires the
Lagrangian multipliers $\lambda_v^i$ associated with constraints \eqref{eq:3} to be restricted in sign in order to guarantee that an optimal 
solution to $(LR_\lambda)$ yields a lower bound on the optimal score
of the \textsc{SCP} problem. 

An integral variable assignment that satisfies constraints~\eqref{eq:8}, \eqref{eq:10},
and constraints~\eqref{eq:9} for neighboring residues, \ie, for $j=i+1$, encodes a
path $p$ in the corresponding $k$-partite graph from a node in $V_1$ to a node in $V_k$ that traverses exclusively edges between neighboring residues. The 
remaining constraints of \eqref{eq:9} involve $y$-variables that do not appear in any other constraint and can thus be chosen independently of each other. 
In other words, we can determine the best possible contribution of a vertex $v$ to the overall objective value, under the assumption that $v$ lies 
on path $p$, by simply picking for every residue $i<r(v)-1$ the edge of minimum weight between $v$ and a node in $V_i$. More formally, we define the \emph{profit}
$\delta$ of a node $v$ as
\[
 \delta(v)=(c_v + \sum_{i>r(v)+1}\lambda_v^i) + \sum_{i=1}^{r(v)-2} \min_{u\in V_i}(c_{uv}-\lambda_u^{r(v)})\enspace,
\]
where the first term in brackets denotes the coefficient of variable $x_v$ in the objective function.
Then the score of a feasible solution to $(LR_\lambda)$ that induces a path $p=(v_1,v_2,\dots,v_k)$ with $v_i\in V_i$ in graph $G$
is \[
    \sum_{i=1}^k \delta(v_i) + \sum_{i=1}^{k-1} c_{v_iv_{i+1}}\enspace.
   \]
Let graph $G'$ be derived from $G$ by removing all edges between
non-neighboring residues, \ie, edges $u'v'$ with $|r(u')-r(v')|>1$, and 
by defining the weights of the remaining edges $uv$ as
$c_{uv}+\delta(v)$. Then, an optimal solution to $(LR_\lambda)$
corresponds to a shortest path in $G'$ from a node in $V_1$ to a node
in $V_k$, see Figure~\ref{fig:shortest_path}.

\begin{theorem}
 An optimal solution to $(LR_\lambda)$ can be computed in time $\mathcal{O}(|V|^2)$.
\end{theorem}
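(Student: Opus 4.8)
The plan is to read the running time directly off the reduction established just above the statement: an optimal solution to $(LR_\lambda)$ is obtained by (i) computing the profit $\delta(v)$ of every node $v \in V$, (ii) constructing the layered graph $G'$, and (iii) computing a shortest path from $V_1$ to $V_k$ in $G'$. Since the correctness of this three-phase scheme is already argued, it suffices to show that each phase runs in time $\mathcal{O}(|V|^2)$.

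For phase (i) I would bound the cost of computing a single profit. The first bracketed term $c_v + \sum_{i>r(v)+1}\lambda_v^i$ is a sum over at most $k \le |V|$ residue positions and so costs $\mathcal{O}(|V|)$. The second term is $\sum_{i=1}^{r(v)-2}\min_{u \in V_i}(c_{uv}-\lambda_u^{r(v)})$; evaluating the inner minimum for a fixed $i$ takes $\mathcal{O}(|V_i|)$ time, and because the sets $V_1,\dots,V_{r(v)-2}$ are pairwise disjoint the whole term costs $\mathcal{O}(\sum_i |V_i|) = \mathcal{O}(|V|)$. Hence each $\delta(v)$ is found in $\mathcal{O}(|V|)$ time and all of them together in $\mathcal{O}(|V|^2)$.

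For phases (ii) and (iii) the key observation is that $G'$ retains only edges between consecutive layers, so its number of edges is $\sum_{i=1}^{k-1}|V_i|\,|V_{i+1}| \le \bigl(\sum_{i=1}^{k}|V_i|\bigr)^2 = |V|^2$. Thus $G'$ is built in $\mathcal{O}(|V|^2)$ time, and since it is a directed acyclic graph whose edges run strictly from layer $V_i$ to layer $V_{i+1}$, a shortest path from $V_1$ to $V_k$ is computed by a single left-to-right dynamic program that relaxes each edge exactly once: set $d(v)=\delta(v)$ for $v\in V_1$, set $d(v)=\delta(v)+\min_{u\in V_i}\bigl(d(u)+c_{uv}\bigr)$ for $v\in V_{i+1}$, and return $\min_{v\in V_k} d(v)$. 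This runs in time linear in the size of $G'$, \ie, $\mathcal{O}(|V|^2)$. Summing the three phases gives the claimed bound.

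The one place that needs care is the profit computation of phase (i): a naive reading of $\delta(v)$ suggests $\mathcal{O}(|V|\cdot k)$ work per node because of the nested sum of minima. The point to make explicit is that the minimizations range over the \emph{disjoint} layers $V_1,\dots,V_{r(v)-2}$, so summing their sizes telescopes to a single pass over at most $|V|$ candidate predecessors rather than $k$ independent passes. Once this is noted, the remaining estimates — the edge count of the layered graph and the linear-time shortest path in a directed acyclic graph — are standard.
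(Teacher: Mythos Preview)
Your proposal is correct and follows the same approach as the paper: compute all profits in $\mathcal{O}(|V|^2)$, then solve a shortest-path problem in the layered DAG $G'$ in time linear in its number of edges, which is $\mathcal{O}(\sum_{i=1}^{k-1}|V_i|\,|V_{i+1}|)=\mathcal{O}(|V|^2)$. The paper's proof is more terse---it asserts the profit bound ``clearly'' and invokes the standard linear-time DAG shortest-path algorithm via the implicit topological order given by the $k$-partition---whereas you spell out the per-node profit estimate and the dynamic program explicitly; but the substance is identical.
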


\begin{proof}
The profits of all nodes can clearly be computed in time $\mathcal{O}(|V|^2)$. Graph $G'$ is acyclic and thus a shortest path
can be computed in time linear in the number of edges in $G'$, \ie, $\mathcal{O}(\sum_{i=1}^{k-1}|V_i|\cdot|V_{i+1}|)$.
Note that a topological sorting of the vertices is implicitly given
by the $k$-partition. 
\end{proof}

\begin{figure}[tbp]
  \centering
  \includegraphics[width=7cm]{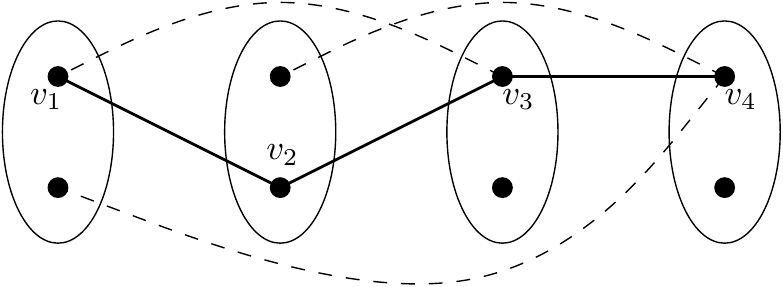}
  \caption{The structure of a feasible solution to $(LR_\lambda)$. The polygon drawn in solid line denotes the corresponding path $p=(v_1,v_2,v_3,v_4)$. Every 
node on path $p$ has exactly one incident edge to every residue left of it, except to its direct neighbor, depicted by the dashed lines.}
  \label{fig:shortest_path}
\end{figure}

We apply a standard \emph{subgradient optimization} technique \cite{HelKar70} to find those Lagrangian
multipliers $\lambda_v^i$ that yield the largest lower bound to our relaxation. This
iterative adaption of the Lagrangian multipliers only requires the profits
of a small fraction of the vertices to be recomputed from scratch in each iteration.
In practice, the shortest path computation for given profits by a simple dynamic programming
scheme dominates the overall running time needed to resolve the Lagrangian relaxation $(LR_\lambda)$ for modified multipliers $\lambda_v^i$.
In other words, the running time will be linear in the number of edges in $G'$ rather than $G$.


In order to improve the practical performance of our approach we sort
the residues by increasing number of rotamers. This ordering results
in a minimum total number of dualized constraints. 

\subsection{Branch-and-Bound}

We embed our Lagrangian bounding scheme into a branch-and-bound framework to obtain 
an energy-minimum rotamer assignment. The general idea is to divide the overall SCP problem into 
easier subproblems by fixing rotamers at individual residue positions and to solve the resulting problems
recursively. To avoid a complete enumeration of all possible rotamer assignments, we employ our
Lagrangian bounds to prune large parts of the enumeration tree. In particular, let $S_k$ be a subproblem
in which certain residue positions have been assigned a rotamer, and let $\hat{x}$ be an arbitrary solution to the
original SCP problem. If the lower bound $\underline{z}^k$ on the minimal energy assignment
for subproblem $S_k$ is larger than $z(\hat{x})$, \ie, the total energy of rotamer assignment $\hat{x}$, 
then no optimal solution to the original problem can be obtained from $S_k$ and we can prune the subtree
rooted at $S_k$ from the search space.  

\paragraph{Branching scheme.}

Starting from an SCP problem instance $S$, a straightforward branching rule
is to impose the constraint $x_{ij}=1$ in the left
child node of the search tree, and $x_{ij}=0$ in the right, \ie, fixing
and forbidding a rotamer $j$ at residue position $i$, 
respectively. However, since every rotamer $j$ is contained in a constraint~\eqref{eq:1} for
a residue position $i$, this leads to an unbalanced search tree, because the right child node leaves
$k-1$ possible rotamers for residue position $i$, whereas the left
child leaves only one possibility. While partitioning the
set of rotamers of a given residue position into two roughly equally
sized sets avoids this imbalance, we experienced,
however, a significantly smaller number of nodes in the tree with the
following, alternative
branching scheme. Instead of creating two subproblems, we create one
for each rotamer of a selected residue position $i$, by
fixing rotamer $j_k$ in subproblem $k$. Only for very large design instances, we first partitioned the sets of 
possible rotamers that were larger than some threshold $p$, into two smaller sets. 
The effectiveness of this scheme is mainly based on two
properties. First, when fixing a rotamer $j$ for a residue
position $i$,
we reduce the problem instance by incorporating the interaction
energy between rotamer $j$ and any other rotamer $j'$ of all residue positions
$i'\neq i$ into the self energy of $j'$. 
In contrast, in our
relaxation only the profits $\delta$ of rotamers of residue positions $i'>i$ would take into
account a further subdivision of the set of rotamers of residue
position $i$. The rotamer assignment to residue positions $i'<i$ would still rely on a 
correct choice of the incoming edges for the remaining rotamers of
residue position $i$, which could only be accomplished by iteratively adapting the
Lagrangian penalties. Second, a large enough set of child nodes
will give our depth-first search traversal of the branch-and-bound tree the freedom
to pick a promising node first. 

\paragraph{Choosing a constraint.}

The question remains which position constraint~\eqref{eq:1} to
choose for a branching step. We adopt the idea of \emph{strong
  branching} \cite{strong_branching_1995}. The rough idea is to
estimate the progress, \ie,
increase in lower bound, for the residue positions before actually branching on one of them.
This is done by successively fixing each rotamer of a given residue
position and solving the resulting Lagrangian subproblem. Based on the progress
of the single rotamers, we compute an overall score of the residue
position, see below, and pick the one with the highest score as
the next residue position to
branch on. Since the computation time per node of this procedure would
be enormous, we try to estimate 
the locally best residue position by simplifying this search in three different aspects.

\begin{itemize}
\item First, we restrict the evaluation to the most promising residue
  positions. More precisely, we order the residue positions in
  increasing order of their maximum (primal) fractional value of its
  rotamers. We recover the primal solutions by
  taking the convex combinations of the last $k$ solutions $x^t$
  produced in the course of the subgradient optimization. Then, at a
  node of depth $l$ of the branch-and-bound tree, we consider the
  first $\gamma(l)$ percent of the sorted residue positions. Note that
  the concept of a residue position with minimal maximal fractional
  value of its rotamer variables 
  can be
  considered as a generalization of the \emph{most infeasible
    branching} rule for binary variables, to constraints of the
  form~\eqref{eq:1}. 

  \item Second, to estimate the increase of the objective function when
  fixing a rotamer, only a few subgradient iterations are performed,
  along with an aggressive multiplier adjustment. 

\item Finally, we
  choose our scoring function of the residue positions in such a way,
  that they can be computed quickly while still giving a good estimate
  on the overall progress in the dual (lower) bound. 
Let $\mathcal{Q}$ be the subproblem corresponding to the current node of the tree and
let problem $\mathcal{Q}^{j}_{i}$ be obtained from $\mathcal{Q}$ by fixing rotamer $j$ in 
residue $r_i$.
Then the
  score of a residue position $r_i$ is given by
$$\xi(r_i)=\min_{j\in r_i} \Delta_i^j,$$ where $\Delta_i^j=\underline{z}(\mathcal{Q}^{j}_{i})-\underline{z}(\mathcal{Q})$. 
To determine the score of a given
residue position $i$, we test the rotamers in decreasing order of
their fractional values. The goal is to evaluate rotamers $j$ with
small progress $\Delta_i^j$ first, since the subgradient optimization
for the remaining rotamers can be aborted as soon as their increase in
objective function exceeds the smallest progress seen so far. Also
notice that sorting the residue positions as described above is
beneficial for the computation of the residue scores. Whenever we
encounter a rotamer with a progress $\Delta_i^j$ that is smaller than
the smallest progress determined for a previous residue position
$i'\neq i$, we do not have to consider the remaining rotamers of $i'$,
since we are interested in the residue position with maximal score.
\end{itemize}

\paragraph{Choosing a node.}

A primal feasible solution that gives a good upper bound on the minimum total energy is necessary to prune the enumeration tree significantly.
Therefore we follow a depth-first search (DFS) strategy to descend in
the branch-and-bound tree as quickly as possible, increasing the chances of finding a 
new and hopefully better feasible solution. Furthermore, the
Lagrangian subproblems corresponding to a node and to one of its immediate descendants
differ only in one residue position. Therefore, a subproblem can be resolved faster when starting from the multiplier vector $\lambda$ determined in the
immediate parent node. On the downside, once being in a wrong branch, one may spend a long time in this subtree before getting back on a path
leading to an improved solution. We thus combine the advantages of DFS and a best-node first strategy. 
We fix the rotamers of a given residue position in increasing order of
their dual (lower) bounds. Following this approach led to a
considerably smaller number of nodes evaluated in the tree, and we were able to find the optimal 
solution much faster in most of the cases.

\section{Experimental Results}
\label{sec:results}

\renewcommand{\floatpagefraction}{0.75}

We have implemented our combinatorial algorithm for finding an optimal
solution to the Lagrangian relaxation $(LR_\lambda)$ in
C\raisebox{.5ex}{\small ++} using the LEDA and BALL libraries \cite{leda,DBLP:journals/bmcbi/HildebrandtDRBSTMSNMLK10}.
We iteratively improve
the obtained lower bounds on the optimal solution of the \textsc{SCP} problem by applying a standard subgradient approach. In each iteration, we derive from the 
Lagrangian solution a feasible solution to the original problem and
thus an upper bound on the optimal score by evaluating the subgraph
induced by the nodes lying on the shortest path from $V_1$ to $V_k$, see Section~\ref{sec:lagrrel}.
We exploit the upper and lower bounds in a branch-and-bound manner to prune large parts of the search space and to derive a provably optimal solution to the \textsc{SCP} problem.
 
In order to determine an initial upper bound for the branch-and-bound framework, 
we employ a simple local search procedure: Given an initial configuration in which each residue position is assigned the rotamer with the lowest self energy,
residue positions are selected randomly and optimized, \ie, the respective position is assigned the rotamer yielding the best energy within the current conformation.
This minimization proceeds until the energy could not be improved 
several times in a row or a maximum of 100 iterations is reached. 

The only state-of-the-art exact method for SCP that
can cope with protein design instances is the ILP based method
proposed by Kingsford \ea \cite{Kingsford2005}. Available software
packages for DEE or treewidth based
approaches such as R3 \cite{Xie2006} or TreePack \cite{Xu2006} do not
allow several candidate amino acids at each position and are thus not
applicable to protein design instances. Furthermore, our experiments
show that even small protein design instances already have treewidths
of 10 to 20 as compared to 3 to 4 for most homology modeling instances \cite{Xu2006}. Since the complexity
of the TreePack algorithm grows exponentially in the treewidth, a reasonable performance on protein design 
instances is not to be expected. For DEE-based methods, a similar
argument holds because reduced protein design instances are still too
large to be processed in reasonable time by residue unification or other enumeration techniques.
 We therefore compare our Lagrangian based approach only to an implementation that
solves the ILP proposed by Kingsford \ea \cite{Kingsford2005} 
using CPLEX 12.2\footnote{\url{http://www.cplex.com}} with Concert
Technology. 

 
In our experiments, we used two different benchmark sets. The first
set consists of protein design energy files provided by Kingsford \ea~\cite{Kingsford2005}.
It comprises 25 proteins 
with 11 to 124 flexible residue positions. Surface residues are fixed. At each core position up to six different amino acids are allowed. 
The employed energy function comprises statistical potentials and van der Waals interactions.
We omit the experimental results on the simpler homology modeling instances, since almost all of these instances can be solved in a fraction of a second by both our 
Lagrangian relaxation approach and the CPLEX based method. The second
set of protein design instances was taken from Yanover \ea
\cite{Yanover2006}. This set comprises 97 proteins with 40 to 180 amino acids. All residue positions are flexible and at each
position all 20 amino acids are allowed yielding very large problem instances. Here, the more realistic 
Rosetta energy function \cite{Kuhlman2000} was used to determine self and interaction energies.

In a preprocessing phase, we apply
established rules \cite{Goldstein1994,Xie2006} to decrease the size of the problem instances while preserving optimality properties. 
Tables~\ref{cplex_lagrange_times_kingsford} and~\ref{cplex_lagrange_times_yanover} show the running times of our
Lagrangian relaxation branch-and-bound approach and the CPLEX based method using
default settings on the resulting instances on 
a compute cluster with two 2.26 GHz Intel Quad Core processors with 24
GB of RAM on each node, running 64 bit Linux. We applied a time limit
of 12 hours and a memory limit of 16 GB\@. Computations exceeding one of
these limits were aborted.

\begin{table}\small
\begin{center}
\begin{tabular*}{\textwidth}{@{\extracolsep{\fill}}  l  rr |rrr |r |r}
\multicolumn{3}{c}{Instance} & 
\multicolumn{3}{c}{Lagrangian B\&B} &
\multicolumn{1}{c}{CPLEX} & \\
Name & \#res & \#rot & N & H & time/s & time/s & S\\
\hline
1aac & 105 & 1523    & 2 & 1 & 1.73& 3.40 & 2.0\\
1aho & 64 & 981   & 1 & 0 & 0.01 & 0.02 & 2.0\\
1b9o & 123 & 2056    & 3 & 1 & 2.09 & 2.56 & 1.2\\
1c5e & 95 & 1108   & 1 & 0 & 0.12& 0.25 & 2.1\\
1c9o & 66 & 1130  & 2 & 1 & 0.33 & 1.96 & 5.9\\
1cc7 & 72 & 1396    & 1 & 0 & 0.28&0.59 & 2.1\\
1cex & 197 & 2556   & 9 & 2 & 13.37& 33.25 &2.5 \\
1cku & 85 & 1093    & 1 & 0 & 0.03& 0.09 & 3.0\\
1ctj & 89 & 1021    & 1 & 0 & 0.07& 0.27 & 3.9\\
1cz9 & 139 & 2332   & 1 & 0 & 3.7& 18.10 & 4.9\\
1czp & 98 & 1170  & 1 & 0& 0.54 & 4.32 & 8.0\\
1d4t & 104 & 1636    & 1 & 0& 0.37 & 2.36 & 6.4\\
1igd & 61 & 926    & 1 & 0 & 0.01& 0.02 & 2.0\\
1mfm & 153 & 2134  & 25 & 5 & 21.89& 145.63 & 6.7\\
1plc & 99 & 1156    & 2 & 1 & 1.50& 6.08 & 4.1\\
1qj4 & 256 & 4080    & 313 & 10 & 8,424.56&
31,636.40 & 3.8\\
1qq4 & 198 & 2045   & 16 & 4 & 32.56& 38.89 &1.2\\
1qtn & 152 & 2516    & 1 & 0 & 1.50& 3.22 & 2.1\\
1qu9 & 126 & 1817    & 2 & 1 & 0.31& 0.66 & 2.1\\
1rcf & 169 & 2396    & 2 & 1 & 4.76& 12.85 & 2.7\\
1vfy & 67 & 939    & 1 & 0 & 0.01& 0.01 & 1.0\\
2pth & 193 & 3077  & 66 & 6 & 322.28& 518.51 & 1.6\\
3lzt & 129 & 2074   & 7 & 2 & 3.20& 10.64 & 3.3\\
5p21 & 166 & 2874   & 52 & 4 & 106.09& 115.01 & 1.1\\
7rsa & 124 & 1958    & 1 & 0 & 0.78& 3.31 & 4.2\\
\end{tabular*}
\end{center}
\caption{Running times of our Lagrangian relaxation branch-and-bound
  approach and the CPLEX based method on the design instances from \cite{Kingsford2005}. We further give the number of residues (\#res) and the total
number of rotamers (\#rot) of the instance, the number
of nodes (N) and height (H) of the branch-and-bound tree as well as the speedup S
(ratio of running times).}
\label{cplex_lagrange_times_kingsford}
\end{table}
The first three columns
of the table give the characteristics of the instances, \ie, their PDB identifier, the number of residues and the total number
of rotamers. 
The following two columns give characteristics of the branch-and-bound
procedure: Columns N and H give the total number of evaluated nodes
and the height of the branch-and-bound tree, respectively. The remaining
columns give the running times in seconds of our Lagrangian based
approach and the CPLEX based method as well as the ratio of running
times S. Note that we include the time spent in the local search
heuristic for the Lagrangian branch-and-bound approach.

\begin{table}\small
\begin{center}
\begin{tabular*}{\textwidth}{@{\extracolsep{\fill}}  l  rr|rrr |r |r}
\multicolumn{3}{c}{Instance} & 
\multicolumn{3}{c}{Lagrangian B\&B} &
\multicolumn{1}{c}{CPLEX} & \\
Name & \#res & \#rot & N & H & time/s & time/s & S\\
\hline
1brf & 44 & 3524 & 9 & 4 & 293.97 & 469.87 & 1.6\\
1bx7 & 25 & 1048 &  1 & 0 & 0.54 & 5.77 & 10.7\\
1d3b & 66   &  5732  & 1  & 0 &  530.37      & 9,577.68 & 18.1 \\
1en2 & 59 & 2689  & 1& 0 & 19.41 & 39.94 & 2.1\\
1ezg & 58 & 1653    & 2 & 1 & 185.11& 441.23 & 2.4 \\
1g6x & 51 & 3190  & 1 & 0 & 23.96 & 160.64 &6.7\\
1gcq & 65 & 5442 & 4  & 2  &    903.82    & 5,270.08 &9.8\\ 
1i07 & 52 & 3186  & 4 & 1 & 187.45 & 166.20 & 0.9 \\
1kth & 49 & 3330  & 18 & 4 & 798.57& 642.42 & 0.8\\
1rb9 & 43 & 3307    & 7	 & 2 & 127.93& 9,535.72 &74.5
\\
1sem & 54   & 4348   &  192 & 8  &     5,020.55   & 6,470.37 &1.3\\
1vfy & 58 & 3951    & 16 & 2 & 2,540.86& $\dag$ & n/a\\
4rxn & 45 & 3636    & 1 & 0 & 220.33& 3,034.57 &13.8 \\
1a8o & 62& 4510  &6 & 2& 1,418.71 &  $\dag$ & n/a\\
1b67 & 66 & 5543  &27 & 4& 3,822.09 &  $\dag$ & n/a\\
1bbz & 52 & 3935 & 7 & 2 &1,329.26 &  $\dag$ & n/a\\
1bf4 & 60 & 5289   & 12 & 3 &1,875.32 &  $\dag$ & n/a\\
1c75 & 63 & 4323  & 25 & 2 & 7,175.69&  $\dag$ & n/a\\
1cc8 & 69 & 6515   & 26 & 2 &16,508.10 &  $\dag$ & n/a\\
1d3b & 66 & 5732  & 1 & 0 & 530.37& $\dag$ & n/a\\
1fr3 & 61 & 5100 & 22 & 4 & 6,997.76&  $\dag$ & n/a\\
1gut & 62 & 4945  & 22 & 2 & 7,745.17&   $\dag$ & n/a\\
1hg7 & 65 & 5047  & 5 & 2 & 987.51&  $\dag$ & n/a\\
1i27 & 69 & 5934   & 39 & 6 & 4,070.20&  $\dag$ & n/a\\
1igd & 60 & 5207  & 18 & 4 & 3,163.14&  $\dag$ & n/a\\
1igq & 53 & 4582 & 18 & 5 & 4,294.16&  $\dag$ & n/a\\
1iqz & 75 & 5412   & 15 & 2 & 2,137.58&  $\dag$ & n/a\\
1j75 & 55 & 4861  & 14 & 4 & 5,704.83&  $\dag$ & n/a\\
1jo8 & 54 & 4680  & 41 & 4 & 1,830.23&  $\dag$ & n/a\\
1kq1 & 58 & 5244   & 7 & 1 & 3,990.65&  $\dag$ & n/a\\
1l9l & 69 & 5518   & 4 & 2 & 1,514.50& $\dag$ & n/a\\
1ldd & 71 & 6383   & 8 & 1 & 4,582.84&  $\dag$ & n/a\\
1ljo & 69 & 6428    & 14 & 3 & 5,624.10&  $\dag$ & n/a\\
1mhn & 53 & 4454   & 3 & 2 & 570.15 &  $\dag$ & n/a\\
1nkd & 56 & 4148     & 9 & 4 & 1,119.56&  $\dag$ & n/a\\
1oai & 56 & 4330    & 73 & 3 & 20,021.10&  $\dag$ & n/a\\
1plc & 92 & 7955    & 34 & 4 & 24,308.50&   $\dag$ & n/a\\
1pwt & 58 & 4876     & 2 & 1 & 886.94&   $\dag$ & n/a\\
1r69 & 60 & 4926    & 49 & 2 & 27,862.50&  $\dag$ & n/a\\
1wap & 65 & 5551    & 14 & 3 & 5,267.68&  $\dag$ & n/a\\
2igd & 59 & 5262  & 13 & 3 & 6,332.26&   $\dag$ & n/a\\
1c4q & 65 & 5598   & 843 & 10 & 30,789&   $\dag$ & n/a\\
1c9o & 60 & 5305   & 97 & 8 & 11,627.20&   $\dag$ & n/a\\
1ctj& 84 & 6232   & 265 & 10 & 7,083.65&   $\dag$ & n/a\\
1dj7& 69 & 5571   & 22 & 4 & 12,581.90&   $\dag$ & n/a\\
1e0b& 58 & 4715   & 147 & 6 & 30,840.10&   $\dag$ & n/a\\
1erv& 101& 9150    & 76 & 8 & 31,539.70&   $\dag$ & n/a\\
1fk5& 81 & 5714     & 28 & 5 & 14,625.30&   $\dag$ & n/a\\
1g2b& 59 & 4926    & 129 & 8 & 6,880.45&   $\dag$ & n/a\\
1mgq& 71& 6250    & 4 & 1 & 2,613.06&   $\dag$ & n/a\\
1vie& 56& 4803    & 3 & 1 & 866.54&   $\dag$ & n/a\\
\end{tabular*}
\end{center}
\caption{Running times of our Lagrangian relaxation
  branch-and-bound approach and the CPLEX based method on the design instances
  from \cite{Yanover2006}. Instances which cannot be solved by both
  approaches within a time limit of 12 hours are omitted. The $\dag$
  sign indicates that a computation exceeded either the time limit (12~h) or the memory limit (16~GB).}
\label{cplex_lagrange_times_yanover}
\end{table}


On the first dataset, our Lagrangian based approach outperforms the
state-of-the-art CPLEX based method on all 25 instances. The small number of nodes evaluated in the course
of the branch-and-bound procedure indicates sharp lower and upper
bounds derived from the Lagrangian solutions. On the more challenging
second dataset, our method could solve 52 of the 97 instances within
12~h, whereas the CPLEX based method could only finish 12 instances
within the time and memory limits. 

\section{Conclusions and Outlook}
\label{conclusion}

We have constructed a Lagrangian relaxation of the Kingsford ILP formulation of the \textsc{SCP} problem that allowed us to obtain strong bounds by solving 
a modified shortest path problem on the underlying $k$-partite
graph. By utilizing these bounds within a branch-and-bound framework we achieved running times that
outperform a state-of-the-art exact method that uses the professional
mathematical programming solver CPLEX\@. Our implementation of the Lagrangian branch-and-bound approach as well
as the data sets used in this paper are freely available as the
package \textsc{scp} of the planet lisa software library~\cite{planetlisa}.

Future work on exact side-chain placement should explore
possible connections to the recently introduced method by Sontag
\ea \cite{Sontag2008}, which is based on belief propagation. This heuristic algorithm is currently the best
non-exact method and finds optimal solutions astonishingly often. In
our opinion, underlying ideas from the area of belief
propagation may be useful also in a truly exact method.

The mathematical model of the \textsc{SCP} problem as studied in this work appears in a wide range of applications including image understanding
, error correcting codes
, and frequency assignment in telecommunications. We believe that our approach can be applied successfully in these areas, too.


\paragraph{Acknowledgements.} 
Much of the work has been carried out while NCT visited CWI on a CWI
internship. The authors thank Inken Wohlers and Oliver Kohlbacher for
useful discussions. Computational experiments were sponsored by the NCF for the use of supercomputer facilities, with financial support from NWO.

\bibliographystyle{abbrv}
\bibliography{sally}


\end{document}